\newenvironment{packed_enum}{
\begin{enumerate}
  \setlength{\itemsep}{1pt}
  \setlength{\parskip}{0pt}
  \setlength{\parsep}{0pt}
}{\end{enumerate}}
\newtheorem{proposition}{Proposition}
\newtheorem{lemma}{Lemma}
\begin{document}

\title{\vspace{-1.5cm}Designing Incentive Schemes Based on Intervention:\\ The Case of Imperfect Monitoring}

\author{Jaeok Park and Mihaela van der Schaar\thanks{Electrical Engineering
Department, University of California, Los Angeles (UCLA). Email: \{jaeok, mihaela\}@ee.ucla.edu.}}

\date{}
%\date{Extended Abstract}

\maketitle

\vspace{-1cm}

\begin{abstract}
We propose an incentive scheme based on intervention to sustain
cooperation among self-interested users. In the proposed scheme,
an intervention device collects imperfect signals about the
actions of the users for a test period, and then chooses
the level of intervention that degrades the performance of the network
for the remaining time period. We analyze the problems of designing an optimal intervention rule
given a test period and choosing an optimal length of the test period.
The intervention device can provide
the incentive for cooperation by exerting intervention following
signals that involve a high likelihood of deviation.
Increasing the length of the test period has two counteracting effects
on the performance: It improves the quality of signals, but at the same time it weakens the
incentive for cooperation due to increased delay.
\end{abstract}

\section{Introduction}

This paper studies incentive schemes to sustain cooperation among
self-interested users sharing a common network resource. When users utilize the network
resource considering their own self-interest, a problem known as the tragedy of the commons \cite{Hardin}
is likely to occur, yielding a suboptimal performance.
%Hence, network management is needed to remedy the problem.
Different methods to overcome the problem have been
investigated in the literature. One method widely studied in economics
and engineering is pricing \cite{varian}. Pricing can induce efficient
use of network resources by internalizing negative externalities.
Although pricing has a solid theoretical foundation, implementing a pricing
scheme can be impractical or cumbersome in some cases. Let us consider a wireless
Internet service as an example. A service provider can limit access
to its network resources by charging an access fee. However, charging an access fee
requires a secure and reliable method to process payments, which creates
burden on both sides of users and service providers. There also arises
the issue of allocative fairness when a service provider charges for the Internet service.
In the presence of the income effect, pricing will bias the allocation
of network resources towards users with high incomes. Because the Internet
can play the role of an information equalizer, it has
been argued in a public policy debate that access to the Internet
should be provided as a public good by a public authority rather than
as a private good in a market \cite{Hallgren}.

Another method popular in game theory is to use repeated interaction \cite{mailath}.
Repeated interaction can encourage cooperative behavior by adjusting
future payoffs depending on current behavior. A repeated game strategy can
form a basis of an incentive scheme in which monitoring and punishment burden
is decentralized to users (see, for example, \cite{La}). However, implementing a repeated game strategy
requires repeated interaction among users, which may not be available.
For example, users interacting in a mobile network change frequently in nature.

In this paper, we use an alternative method based on intervention, which
was proposed in our previous work \cite{jpark}. In an incentive scheme based on
intervention, a network is augmented with an intervention device that is able to monitor the actions of users
and to take an action that affects the payoffs of users. In \cite{jpark}, we considered an ideal
scenario where the intervention device can observe the actions of users without errors
immediately after users choose their actions. In this paper, we consider
a more realistic scenario where the intervention device can obtain only imperfect
information about the actions of users and it takes time for the intervention device
to collect signals. Intervention directly affects the
network usage of users, unlike pricing which uses an outside instrument
to affect the payoffs of users. Thus, an incentive scheme based on intervention can provide
an effective and robust method to induce cooperation in that users
cannot avoid intervention as long as they use network resources.
Moreover, it does not require long-term relationship among users, which
makes it applicable to networks with a dynamically changing user population.

\section{Model and Problem Formulation}

We consider a communication channel shared by $N$ users. Time is divided
into slots of equal length, and in each slot a user can attempt to transmit
its packet or wait. If there is only one transmission attempt in a slot, the
packet is successfully transmitted. If there is more than one transmission
attempt in a slot, packets collide and no transmission is successful.
For simplicity, we assume that each user can choose one of two transmission probabilities
$p_l$ and $p_h$, where $p_l = 1/N < p_h <1$. Note that each user choosing $p_l$
maximizes the total throughput, defined as the average number of
successfully transmitted packets per time slot, assuming that all the users choose
the same transmission probability \cite{massey}.

We consider a period consisting of $T$ consecutive time slots, and analyze
interaction in the period without any consideration of past or future periods.
We assume that the number of users
and their transmission probabilities are fixed throughout a period.
Let $\mathcal{N} = \{1,\ldots,N\}$ be the set of the users.
The action space of user $i$ is denoted by $A_i = \{p_l, p_h\}$,
and the action of user $i$ is denoted by $a_i \in A_i$,
for all $i \in \mathcal{N}$. An action profile is represented by
a vector $\mathbf{a} = (a_1, \ldots, a_N) \in A \triangleq \prod_{i \in \mathcal{N}} A_i$.
The payoff of user $i$ is given by the number of its successfully transmitted
packets per time slot. Then the expected payoff of user $i$ is given
by the probability of its successful transmission, $a_i \prod_{j \in \mathcal{N} \setminus \{i\}} (1-a_j)$.
It is easy to see that the action $p_h$ is a dominant strategy for every user.
Hence, $(p_h,\ldots,p_h)$ is the unique Nash equilibrium, which
yields the lower total throughput than the symmetric social optimum $(p_l,\ldots,p_l)$.

In order to improve the inefficiency of Nash equilibrium, we introduce
an intervention device in the system. The intervention device is capable of
monitoring the actions of the users and interfering in the transmission of the users.
The intervention device can sense the channel
to learn whether the channel is idle (i.e., no user attempts to transmit its packet) or busy
(i.e., at least one user attempts to transmit its packet).
We consider a scenario where the intervention device collects signals
from sensing the channel for the first $t$ slots, where $1 \leq t \leq T$, and then chooses
its transmission probability, which can be interpreted as the intervention level.

Let $S = \{idle, busy\}$ be the set of all possible signals obtained in a slot. Then
the set of all possible signals that the intervention device can obtain for the $t$ slots
is $S^t$. The probability distribution of signals is independent
across slots, and when the users choose action profile $\mathbf{a}$, the probability
of obtaining an idle signal in a slot is given by $q(\mathbf{a}) \triangleq \prod_{i \in \mathcal{N}} (1-a_i)$.
After obtaining $t$ signals, the intervention device
chooses a transmission probability in $[0,1]$, which remains fixed
until the end of the period. We use subscript 0 for the intervention device.
The action space of the intervention device is denoted by $A_0 = [0,1]$,
and its action is denoted by $a_0 \in A_0$. The decision rule of the intervention
device, called the intervention rule, can be represented by a function
$f:S^t \rightarrow A_0$. Since the transmission probabilities of the users do not
change in a period, there is no gain for the intervention device to distinguish
signals from different slots. Hence, we focus on the class of intervention rules
that use only the number of idle signals, which can be represented by
$f:\{0,1,\ldots,t\} \rightarrow A_0$. The probability that $k$ idle signals arise
out of $t$ signals when the users choose action profile $\mathbf{a}$ is
$\binom{t}{k} q(\mathbf{a})^k (1-q(\mathbf{a}))^{t-k}$, for $k=0,1,\ldots,t$.
Note that monitoring is imperfect in the sense that the intervention device cannot
observe the action profile of the users but obtains only imperfect information about
the action profile.

The sequence of events in a period can be listed as follows.
\begin{packed_enum}
\item At the beginning of the period, the users choose their transmission
probabilities $\mathbf{a} \in A$, which are used from slot 1 to slot $T$, knowing
the intervention rule $f$ adopted by the intervention device.
\item The intervention device collects signals from slot 1 to slot $t$.
\item The intervention device intervenes using the transmission probability
prescribed by the intervention rule $f$ from slot $t+1$ to slot $T$.
\end{packed_enum}

The payoff of user $i$ when the users choose action profile $\mathbf{a}$
and the intervention device chooses action $a_0$ is given by
\begin{align}
u_i(a_0, \mathbf{a}) &= \frac{t}{T} a_i \prod_{j \in \mathcal{N} \setminus \{i\}} (1-a_j)
+ \frac{T-t}{T} a_i (1-a_0) \prod_{j \in \mathcal{N} \setminus \{i\}} (1-a_j)\\
&= \left( 1 - \frac{T-t}{T} a_0 \right) a_i \prod_{j \in \mathcal{N} \setminus \{i\}} (1-a_j).
\end{align}
The action profile of the users influences the probability distribution of signals,
which in turn affects the action of the intervention device. The expected
payoff of user $i$ when the users choose action profile $\mathbf{a}$
and the intervention device uses intervention rule $f$ can be expressed as
\begin{align}
v_i(\mathbf{a};f) &= \sum_{k=0}^t \binom{t}{k} q(\mathbf{a})^k (1-q(\mathbf{a}))^{t-k} u_i(f(k), \mathbf{a})\\
&= \left[ 1 - \frac{T-t}{T} \sum_{k=0}^t \binom{t}{k} q(\mathbf{a})^k (1-q(\mathbf{a}))^{t-k} f(k) \right]
a_i \prod_{j \in \mathcal{N} \setminus \{i\}} (1-a_j).
\end{align}
Note that $\sum_{k=0}^t \binom{t}{k} q(\mathbf{a})^k (1-q(\mathbf{a}))^{t-k} f(k)$
can be interpreted as the expected transmission probability of the
intervention device, while $(T-t)/T$ is the weight on the slots in which
the action of the intervention device affects the users.

For notation, let us define
\begin{align}
\lambda(k;t) &= \binom{t}{k} [(1-p_l)^N]^k [1-(1-p_l)^N]^{t-k},\\
\mu(k;t) &= \binom{t}{k} [(1-p_l)^{N-1}(1-p_h)]^k [1-(1-p_l)^{N-1}(1-p_h)]^{t-k},
\end{align}
for $k=0,1,\ldots,t$, and let $\tau_c = p_l(1-p_l)^{N-1}$ and $\tau_d = p_h(1-p_l)^{N-1}$.
$\lambda(k;t)$ is the probability of $k$ idle signals arising out of
$t$ signals when every user cooperates (i.e., chooses $p_l$), while $\mu(k;t)$ is that when
exactly one user defects (i.e., chooses $p_h$). $\tau_c$ is the cooperation throughput
that each user obtains when all the users choose $p_l$, while $\tau_d$
is the defection throughput that a user obtains when it deviates
to $p_h$ unilaterally. Note that an idle signal is more likely to occur when
every user cooperates than when some user defects. Also, note that $\tau_d > \tau_c$,
which reflects the positive gain from defection when there is no intervention.

Suppose that there is a network manager who determines the intervention
rule used by the intervention device. The objective of the manager is to maximize
the sum of the payoffs (i.e., total throughput) while sustaining cooperation
among the users. The cooperation payoff is given by
\begin{align}
\left[ 1 - \frac{T-t}{T} \sum_{k=0}^t \lambda(k;t) f(k) \right] \tau_c,
\end{align}
while the defection payoff is
\begin{align}
\left[ 1 - \frac{T-t}{T} \sum_{k=0}^t \mu(k;t) f(k) \right] \tau_d.
\end{align}
Hence, the incentive constraint for the users to cooperate can be
written as
\begin{align}
\left[ 1 - \frac{T-t}{T} \sum_{k=0}^t \lambda(k;t) f(k) \right] \tau_c
\geq \left[ 1 - \frac{T-t}{T} \sum_{k=0}^t \mu(k;t) f(k) \right] \tau_d,
\end{align}
and the problem of designing an intervention rule can be expressed as
\begin{align}
&\max_{f} N \left[ 1 - \frac{T-t}{T} \sum_{k=0}^t \lambda(k;t) f(k) \right] \tau_c \label{eq:pdp1}\\
&\text{subject to } \left[ 1 - \frac{T-t}{T} \sum_{k=0}^t \lambda(k;t) f(k) \right] \tau_c
\geq \left[ 1 - \frac{T-t}{T} \sum_{k=0}^t \mu(k;t) f(k) \right] \tau_d, \label{eq:pdp2}\\
&\qquad \qquad \quad 0 \leq f(k) \leq 1 \text{ for all $k = 0, \ldots, t$}.\label{eq:pdp3}
\end{align}

\section{Analysis of the Design Problem}

The design problem \eqref{eq:pdp1}--\eqref{eq:pdp3} can be rewritten as
a linear programming (LP) problem,
\begin{align}
&\min_{f} \sum_{k=0}^t \lambda(k;t) f(k) \label{eq:lp1}\\
&\text{subject to } \frac{T-t}{T} \sum_{k=0}^t [\tau_d \mu(k;t) - \tau_c \lambda(k;t)] f(k)
\geq \tau_d - \tau_c \label{eq:lp2}\\
&\qquad \qquad \quad 0 \leq f(k) \leq 1 \text{ for all $k = 0, \ldots, t$}.\label{eq:lp3}
\end{align}
The LP problem \eqref{eq:lp1}--\eqref{eq:lp3} is to minimize the
expected transmission probability of the intervention device while
satisfying the incentive constraint and the probability constraints.
Exerting intervention is necessary to punish a deviation, but
at the same time intervention incurs efficiency loss under imperfect
monitoring. Therefore, the manager wants to use the minimum possible intervention
level while providing the incentive for cooperation. The left-hand side of
the incentive constraint \eqref{eq:lp2} is the expected loss from
deviation due to the change in the probability distribution of signals
induced by deviation, while the right-hand side is the gain from
deviation.

\begin{lemma}
Suppose that an optimal solution to the LP problem \eqref{eq:lp1}--\eqref{eq:lp3}
exists. Then the incentive constraint \eqref{eq:lp2} is satisfied
with equality at the optimal solution.
\end{lemma}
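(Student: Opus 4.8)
The plan is to argue by contradiction through a local perturbation of a candidate optimal solution, exploiting the fact that every coefficient $\lambda(k;t)$ in the objective \eqref{eq:lp1} is strictly positive. First I would record two preliminary facts. Since $0 < (1-p_l)^N < 1$, the binomial weight $\lambda(k;t) = \binom{t}{k}[(1-p_l)^N]^k[1-(1-p_l)^N]^{t-k}$ is strictly positive for every $k \in \{0,1,\ldots,t\}$. Moreover, because $\tau_d > \tau_c$, feasibility of \eqref{eq:lp2} forces $\frac{T-t}{T} > 0$; otherwise its left-hand side would vanish while its right-hand side is positive, so no feasible point could exist, contradicting the hypothesis that an optimal solution exists.

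Next, let $f^*$ be an optimal solution and suppose, for contradiction, that \eqref{eq:lp2} holds strictly, say with slack $\delta > 0$. I would first observe that $f^*$ cannot be identically zero: if $f^*(k) = 0$ for all $k$, the left-hand side of \eqref{eq:lp2} equals $0 < \tau_d - \tau_c$, which is infeasible. Hence there is some index $k_0$ with $f^*(k_0) > 0$. I would then define a perturbed rule $\tilde f$ that agrees with $f^*$ everywhere except at $k_0$, where $\tilde f(k_0) = f^*(k_0) - \epsilon$ for a small $\epsilon \in (0, f^*(k_0)]$. By construction $\tilde f$ satisfies the box constraints \eqref{eq:lp3}, and the objective strictly decreases, since the change equals $-\epsilon\, \lambda(k_0;t) < 0$.

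The one step requiring care is verifying that $\tilde f$ still satisfies the incentive constraint. Decreasing $f^*(k_0)$ by $\epsilon$ changes the left-hand side of \eqref{eq:lp2} by $-\epsilon \frac{T-t}{T}[\tau_d \mu(k_0;t) - \tau_c \lambda(k_0;t)]$, so I would split on the sign of the coefficient $\tau_d \mu(k_0;t) - \tau_c \lambda(k_0;t)$. If it is nonpositive, the left-hand side does not decrease and the constraint remains satisfied for every admissible $\epsilon$. If it is positive, I would simply choose $\epsilon$ small enough that $\epsilon \frac{T-t}{T}[\tau_d \mu(k_0;t) - \tau_c \lambda(k_0;t)] < \delta$, so that the left-hand side stays above $\tau_d - \tau_c$. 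In either case $\tilde f$ is feasible with strictly smaller objective, contradicting the optimality of $f^*$. Therefore \eqref{eq:lp2} must bind at every optimal solution.

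I expect the main (and only) obstacle to be this sign bookkeeping together with the observation that some coordinate of $f^*$ is strictly positive; the remainder is a routine feasibility-preserving descent argument. LP duality and complementary slackness would yield the same conclusion, but the direct perturbation is more transparent and avoids invoking heavier machinery.
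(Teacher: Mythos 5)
Your proposal is correct and takes essentially the same route as the paper: a contradiction by reducing one strictly positive coordinate of $f^*$, which strictly lowers the objective because every $\lambda(k;t)>0$, while the slack in \eqref{eq:lp2} (or the sign of the perturbed coefficient) preserves feasibility for small enough $\epsilon$. The only cosmetic difference is that the paper directly selects an index $k'$ with $\tau_d \mu(k';t) - \tau_c \lambda(k';t) > 0$ and $f^*(k') > 0$ --- which must exist because the left-hand side of \eqref{eq:lp2} is strictly positive --- whereas you perturb an arbitrary positive coordinate and case-split on the sign of its coefficient; both versions are valid.
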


\begin{proof}
Let $f^*$ be an optimal solution. Suppose that
$[(T-t)/T] \sum_{k=0}^t [\tau_d \mu(k;t) - \tau_c \lambda(k;t)] f^*(k)
> \tau_d - \tau_c$. Since $\tau_d > \tau_c$, there exists $k'$ such
that $\tau_d \mu(k';t) - \tau_c \lambda(k';t) > 0$ and $f^*(k') > 0$.
Then we can reduce $f^*(k')$ while satisfying the incentive constraint
and the probability constraint for $k'$, which decreases the objective
value since $\lambda(k;t) > 0$ for all $k$. This contradicts the optimality
of $f^*$.
\end{proof}

\noindent
Lemma 1 validates the intuition that the manager wants to use
a punishment just enough to prevent deviation. The following
proposition provides a necessary and sufficient condition
for the LP problem to have a feasible solution, and the structure
of an optimal solution.

\begin{proposition}
Let $k_0 = \max \{k : \tau_d \mu(k;t) - \tau_c \lambda(k;t) > 0 \}$.
Then the LP problem has a feasible solution if and only if
\begin{align} \label{eq:feas}
\frac{T-t}{T} \sum_{k \leq k_0} [\tau_d \mu(k;t) - \tau_c \lambda(k;t)]
\geq \tau_d - \tau_c.
\end{align}
Moreover, if the LP problem has a feasible solution, then there exists
a unique optimal solution $f^*$ described by
\begin{eqnarray}
f^*(k) = \left\{ \begin{array}{ll}
1 & \textrm{if $k<\bar{k}$,}\\
\frac{1}{\tau_d \mu(\bar{k};t) - \tau_c \lambda(\bar{k};t)} \left[
\frac{T}{T-t} (\tau_d - \tau_c) - \sum_{k=0}^{\bar{k}-1} [\tau_d \mu(k;t) - \tau_c \lambda(k;t)] \right] & \textrm{if $k=\bar{k}$,}\\
0 & \textrm{if $k>\bar{k}$,}
\end{array} \right.
\end{eqnarray}
where
\begin{align}
\bar{k} = \min \left\{k': \frac{T-t}{T} \sum_{k \leq k'} [\tau_d \mu(k;t) - \tau_c \lambda(k;t)]
\geq \tau_d - \tau_c \right\}.
\end{align}
\end{proposition}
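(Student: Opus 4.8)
The plan is to recognize the LP \eqref{eq:lp1}--\eqref{eq:lp3} as a continuous (fractional) knapsack problem and to solve it by a greedy/exchange argument. Write $c_k \triangleq \tau_d \mu(k;t) - \tau_c \lambda(k;t)$ for the coefficient of $f(k)$ in the incentive constraint, and rescale \eqref{eq:lp2} to $\sum_{k=0}^t c_k f(k) \geq V$, where $V \triangleq \frac{T}{T-t}(\tau_d - \tau_c)$. Then each index $k$ is an ``item'': raising $f(k)$ by one unit buys $c_k$ units of constraint value at an objective cost of $\lambda(k;t)$. Since we minimize total cost subject to buying at least $V$ units of value, the optimal policy should buy value from the items with the smallest cost-to-value ratio $\lambda(k;t)/c_k$ first, and never use items with $c_k \leq 0$ (which add cost without adding value, and indeed hurt feasibility when $c_k<0$). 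Accordingly, I would first argue that any optimal $f^*$ sets $f^*(k)=0$ whenever $c_k \leq 0$, because lowering such an $f^*(k)$ to $0$ weakly relaxes the constraint and strictly lowers the objective (recall $\lambda(k;t)>0$).

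The key step, which I expect to be the main obstacle, is to establish the monotone ordering of the items, i.e.\ a monotone likelihood ratio property. Writing $\alpha \triangleq (1-p_l)^N$ and $\beta \triangleq (1-p_l)^{N-1}(1-p_h)$, I would compute
\[
\frac{\mu(k;t)}{\lambda(k;t)} = \left(\frac{\beta}{\alpha}\right)^{k} \left(\frac{1-\beta}{1-\alpha}\right)^{t-k}.
\]
Because $p_h>p_l$ gives $\beta<\alpha$, hence $\beta/\alpha<1$ and $(1-\alpha)/(1-\beta)<1$, the ratio $\mu(k;t)/\lambda(k;t)$ is strictly decreasing in $k$. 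Two consequences follow. First, since $c_k = \lambda(k;t)\bigl[\tau_d\,\mu(k;t)/\lambda(k;t)-\tau_c\bigr]$ with $\lambda(k;t)>0$, the sign of $c_k$ tracks the strictly decreasing quantity $\tau_d\,\mu(k;t)/\lambda(k;t)-\tau_c$; thus $c_k>0$ for all $k\leq k_0$ and $c_k\leq 0$ for $k>k_0$, consistent with the definition of $k_0$. Second, on $\{0,\dots,k_0\}$ the cost-to-value ratio $\lambda(k;t)/c_k = 1/\bigl[\tau_d\,\mu(k;t)/\lambda(k;t)-\tau_c\bigr]$ is strictly increasing in $k$, so the cheapest value is bought at the smallest indices.

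With the ordering in hand, feasibility is immediate: the largest attainable value is obtained by setting $f(k)=1$ for all $k\leq k_0$ and $f(k)=0$ otherwise, giving $\sum_{k\leq k_0} c_k$, so the LP is feasible if and only if $\sum_{k\leq k_0}c_k \geq V$, which is exactly \eqref{eq:feas}. For optimality I would run the standard exchange argument: if an optimal $f^*$ had indices $i<j\leq k_0$ with $f^*(i)<1$ and $f^*(j)>0$, I would raise $f^*(i)$ and lower $f^*(j)$ keeping the bought value $c_i f^*(i)+c_j f^*(j)$ fixed; since $\lambda(i;t)/c_i<\lambda(j;t)/c_j$ this strictly lowers the objective, contradicting optimality. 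Hence at the optimum there is a threshold $m$ with $f^*(k)=1$ for $k<m$, $f^*(m)\in(0,1]$, and $f^*(k)=0$ for $k>m$ (existence of the optimum being guaranteed since the feasible set is compact and the objective linear).

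Finally, I would pin down $m=\bar{k}$ and the fractional value. By Lemma~1 the constraint binds, so $\frac{T-t}{T}\bigl[\sum_{k<m}c_k + c_m f^*(m)\bigr]=\tau_d-\tau_c$. The conditions $f^*(m)\in(0,1]$ translate, using $c_m>0$, into $\frac{T-t}{T}\sum_{k<m}c_k < \tau_d-\tau_c \leq \frac{T-t}{T}\sum_{k\leq m}c_k$, which is precisely the defining property of $\bar{k}$; thus $m=\bar{k}$. Solving the binding constraint for $f^*(\bar{k})$ then yields the stated formula, and since every step forces the values of $f^*$, the optimal solution is unique.
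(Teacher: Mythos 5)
Your proposal is correct, and it rests on the same backbone as the paper's proof: the monotone likelihood ratio computation $\mu(k;t)/\lambda(k;t) = (\beta/\alpha)^k((1-\beta)/(1-\alpha))^{t-k}$ strictly decreasing in $k$, the resulting sign pattern of $c_k = \tau_d\mu(k;t)-\tau_c\lambda(k;t)$ around $k_0$, and the feasibility argument via the extremal solution $f(k)=\mathbf{1}\{k\le k_0\}$, which is word-for-word the paper's. Where you diverge is the optimality step. The paper works with the \emph{aggregate} likelihood ratio $l(f)=\sum_k\mu(k;t)f(k)/\sum_k\lambda(k;t)f(k)$, rewrites the objective of a constraint-binding $f$ as $\frac{T}{T-t}\frac{\tau_d-\tau_c}{\tau_d l(f)-\tau_c}$, and then asserts that maximizing $l(f)$ means loading probability onto signals in increasing order of $k$ --- a greedy conclusion it states rather than derives in detail. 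You instead run a per-index fractional-knapsack exchange on the cost-to-value ratios $\lambda(k;t)/c_k = 1/(\tau_d L(k;t)-\tau_c)$, which are strictly increasing on $\{0,\dots,k_0\}$, and conclude the threshold structure by contradiction. The two arguments are equivalent in substance, but yours is more elementary and arguably more complete: the pairwise exchange explicitly justifies the greedy ordering that the paper takes for granted, and your final step (translating $f^*(m)\in(0,1]$ of the binding constraint into the defining inequalities for $\bar k$) makes the identification $m=\bar k$ and the uniqueness claim fully explicit, where the paper compresses this into one sentence. The paper's $l(f)$ reformulation buys something too --- it is reused later in the text to express the optimal throughput as $N[1-(\tau_d-\tau_c)/(\tau_d l(f)-\tau_c)]\tau_c$ and to discuss the quality-versus-delay tradeoff in $t$ --- but as a proof of the proposition itself your route is self-contained and sound.
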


\begin{proof}
Define the likelihood ratio of signal $k$ (i.e., $k$ idle signals out of $t$
signals) by
\begin{align}
L(k;t) = \frac{\mu(k;t)}{\lambda(k;t)} = \left( \frac{1-p_h}{1-p_l} \right)^k
\left( \frac{1-(1-p_l)^{N-1}(1-p_h)}{1-(1-p_l)^N} \right)^{t-k}.
\end{align}
It is easy to see that $L(0;t) > 1$, $L(t;t) < 1$, and $L(k;t)$ is monotonically
decreasing in $k$. Note that $\tau_d \mu(k;t) - \tau_c \lambda(k;t) > 0$
if and only if $L(k;t) > p_l/p_h$. Hence, $k_0$ is well-defined, and
$\tau_d \mu(k;t) - \tau_c \lambda(k;t) > 0$ if and only if $k \leq k_0$.
If \eqref{eq:feas} is satisfied, then $\tilde{f}$ defined by $\tilde{f}(k) = 1$ for all $k \leq k_0$
and $\tilde{f}(k) = 0$ for all $k > k_0$ is a feasible solution. To prove the converse,
suppose that a feasible solution, say $f$, exists. Then we have
\begin{align}
\frac{T-t}{T} \sum_{k \leq k_0} [\tau_d \mu(k;t) - \tau_c \lambda(k;t)] \geq
\frac{T-t}{T} \sum_{k=0}^t [\tau_d \mu(k;t) - \tau_c \lambda(k;t)] f(k)
\end{align}
and
\begin{align}
\frac{T-t}{T} \sum_{k=0}^t [\tau_d \mu(k;t) - \tau_c \lambda(k;t)] f(k)
\geq \tau_d - \tau_c,
\end{align}
and combining the two yields \eqref{eq:feas}.

To prove the second result, suppose that the LP problem has
a feasible solution. Then there exists a feasible solution, say $f$, that satisfies
the incentive constraint with equality. Define the likelihood ratio of $f$ by
\begin{align}
l(f) = \frac{\sum_k \mu(k;t) f(k)}{\sum_k \lambda(k;t) f(k)}.
\end{align}
Then the objective value in \eqref{eq:lp1} at $f$ can be expressed as
\begin{align}
\frac{T}{T-t} \frac{\tau_d - \tau_c}{\tau_d l(f) - \tau_c}.
\end{align}
Hence, the objective value decreases as $f$ has a larger likelihood ratio.
To optimize the objective value, $f$ should put the probabilities on the signals
starting from signal 0 to signal 1, and so on, until the incentive constraint
is satisfied with equality. Thus, we obtain $\bar{k}$, where $0 \leq \bar{k} \leq
k_0$, that is associated with the unique optimal solution.
\end{proof}

Since a smaller number of idle signals gives a higher likelihood ratio, an
intervention rule yields a smaller efficiency loss when intervention is
exerted following a smaller number of idle signals. Put differently,
signal $k$ provides a stronger indication of defection as $k$ is smaller.
However, using only signal 0 may not be sufficient to provide
the incentive for cooperation, in which case other signals need to be
used as well. Using signal $k$ with $k \leq k_0$ contributes to provide
the incentive for cooperation, although the ``quality'' of the signal
decreases as $k$ increases. Hence, it is optimal for the manager to
use signals with small $k$ primarily, which yields a threshold $\bar{k}$.

So far we have analyzed the problem of designing an intervention rule when
the total number of signals, $t$, is fixed. Now we consider a scenario
where the manager can choose $t$ as well as an intervention rule.
In this scenario, there are two counteracting effects of increasing $t$.
First, note that the objective value in \eqref{eq:pdp1} can be expressed as
\begin{align}
N \left[ 1 - \frac{\tau_d - \tau_c}{\tau_d l(f) - \tau_c} \right] \tau_c,
\end{align}
which shows that increasing $t$ affects the objective value only through $f$.
Since $L(k;t)$ is increasing in $t$, we can achieve a larger likelihood ratio
$l(f)$ with larger $t$. In other words, as the intervention device collects
more signals, the information becomes more accurate (quality effect). On the other hand,
increasing $t$ decreases the weight given on the slots in which intervention
is applied, which makes the incentive constraint harder to satisfy (delay
effect).

Let $\tau^*(t)$ be the optimal value of the design problem \eqref{eq:pdp1}--\eqref{eq:pdp3},
where we set $\tau^*(t) = N p_h (1-p_h)^{N-1}$ if there is no feasible
solution with $t$. The problem of finding an optimal number of signals
can be written as $\max_{t \in \{1,\ldots,T\}} \tau^*(t)$.
In general, $\tau^*(t)$ is a non-monotonic function of $t$, and we
provide a numerical example to illustrate the result. We consider
system parameters $N=5$, $p_l = 1/N = 0.2$, $p_h = 0.8$, and $T = 100$. Then we have
$\tau_c = 0.08$ and $\tau_d = 0.33$. The numerical results show that the LP problem
is infeasible for $t=1$ and $t \geq 21$. With $t=1$, there is not
sufficient information based on which intervention can provide the incentive
for cooperation. With $t \geq 21$, the delay effect is dominant, which prevents
the incentive constraint to be satisfied. Figure~\ref{fig:dura} plots $\tau^*(t)$
for $t = 2, \ldots, 20$. We can see that $\tau^*(t)$ is non-monotonic
while reaching the maximum at $t = 18$ with $\tau^*(18) = 0.37$. In the plot,
the dotted line represents the total throughput at $(p_l,\ldots,p_l)$, $N \tau_c$.
The difference between $\tau^*(t)$ and $N \tau_c$ can be interpreted as
the efficiency loss due to imperfect monitoring.\footnote{If the
intervention device can observe the actions of the users
immediately, it can use the threat of transmitting with probability 1
when a deviation is detected to sustain cooperation without incurring an efficiency loss.}
Lastly, we note that $\bar{k}$ in Proposition 1 is non-decreasing in $t$,
with $\bar{k}=1$ for $t=2,\ldots,7$, $\bar{k}=2$ for $t=8,\ldots,13$,
$\bar{k}=3$ for $t=14,\ldots,18$, and $\bar{k}=4$ for $t=19,20$.

\begin{figure}
\centering
\includegraphics[width=0.6\textwidth]{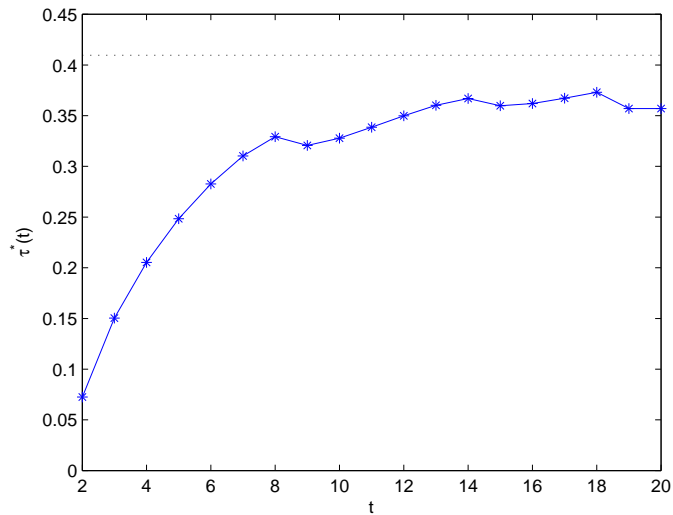}
\caption{Plot of $\tau^*(t)$ for $t = 2, \ldots, 20$.}
\label{fig:dura}
\end{figure}

\section{Conclusion}

We have studied the problem of designing incentive schemes based on
the idea of intervention to sustain cooperation among users sharing
network resources in the case of imperfect monitoring. We have used a
simple model to present the main ideas and results without too many complications.
Our model can be extended in several directions, among which we mention two.
First, users can use more complicated decision rules than the one
choosing one of two transmission probabilities. The action space for a user
can be relaxed to $[0,1]$ instead of $\{p_l,p_h\}$. Also, users can have
an ability to monitor the actions of other users and the intervention device.
In such a scenario, we can  study intervention rules to sustain a cooperative decision rule, where
a decision rule for a user is a mapping from its information set to
its action space. Second, the set of signals that the intervention
device can obtain in a slot can be generalized. For example, a signal
can be ternary (idle, success, collision) or the number of users that
attempted to transmit. We can investigate how optimal intervention
rules and their performance change as the intervention device
obtains finer information about the actions of users.
Finally, we conclude with a remark that incentive schemes based on intervention
can be applied to a wide range of networks where
cooperative behavior should be encouraged. Potential applications
include communication networks (power control, congestion control,
and medium access control) and peer-to-peer networks.

\small
\singlespacing


\begin{thebibliography}{99}

\bibitem{Hardin} G. Hardin, ``The tragedy of the commons,''
\emph{Science}, vol. 162, no. 3859, pp. 1243--1248, Dec. 1968.

\bibitem{varian} J. K. MacKie-Mason and H. R. Varian, ``Pricing congestible
network resources,'' \emph{IEEE J. Sel. Areas Commun.}, vol. 13, no. 7, pp.
1141--1149, Sep. 1995.

\bibitem{Hallgren} M. M. Hallgren and A. K. McAdams, ``The economic efficiency
of Internet public goods,'' in \emph{Internet Economics}, L. W. McKnight and J.
P. Bailey, Eds. Cambridge, MA: MIT Press, 1997, pp. 455--478.

\bibitem{mailath} G. Mailath and L. Samuelson, \emph{Repeated Games
and Reputations: Long-run Relationships}. Oxford, U.K.: Oxford Univ.
Press, 2006.

\bibitem{La} R. J. La and V. Anantharam, ``Optimal routing control: repeated
game approach,'' \emph{IEEE Trans. Autom. Control}, vol.47, no.3, pp.437--450, Mar. 2002.

\bibitem{jpark} J. Park and M. van der Schaar, ``Stackelberg contention games
in multiuser networks,'' {\it EURASIP J. Advances Signal Process.}, vol. 2009,
Article ID 305978, 15 pages, 2009.

\bibitem{massey} J. L. Massey and P. Mathys, ``The collision channel without feedback,''
\emph{IEEE Trans. Inf. Theory}, vol. 31, no. 2, pp. 192--204, Mar. 1985.

\end{thebibliography}
\end{document}